\renewcommand\footnotetextcopyrightpermission[1]{}
\newtheorem{definition}{Definition}
\newtheorem{claim}{Claim}
\newtheorem{assumption}{Assumption}
\newcommand*\BitNeg{\ensuremath{\mathord{\sim}}}
\newcommand*\BitAnd{\mathbin{\&}}
\newcommand*\BitOr{\mathbin{|}}
\newcommand*\BitXor{\mathbin{\char`\^}}
\newlength{\pgmtab}  %  \pgmtab is the width of each tab in the
\newenvironment{program}{%
\begin{tabbing}\hspace{0em}\=\hspace{0em}\=%
\hspace{\pgmtab}\=\hspace{\pgmtab}\=\hspace{\pgmtab}\=\hspace{\pgmtab}\=%
\hspace{\pgmtab}\=\hspace{\pgmtab}\=\hspace{\pgmtab}\=\hspace{\pgmtab}\=%
\hspace{\pgmtab}\=\hspace{\pgmtab}\=\hspace{\pgmtab}\=
\+\+\kill}{\end{tabbing}}
\newcommand {\BEGIN}{{\bf begin\ }}
\newcommand {\ELSE}{{\bf else\ }}
\newcommand {\IF}{{\bf if\ }}
\newcommand {\DO}{{\bf do\ }}
\newcommand {\RETURN}{\mbox{\bf return\ }}
\newcommand {\TRUE}{\textsc{true}}
\newcommand {\FALSE}{\textsc{false}}
\newcommand {\INPUT}{{\bf input}}
\newcommand {\OUTPUT}{{\bf output}}
\newcommand {\OR}{{\bf or\ }}
\newcommand {\INITIALIZE}{{\bf initialize\ }}
\newcommand {\FOREACH}{{\bf foreach\ }}
\newcommand {\NULL}{{\bf NULL}}
\newcolumntype{R}[1]{>{\raggedleft\let\\\tabularnewline}p{#1}}
\newcolumntype{C}[1]{>{\centering\let\\\tabularnewline}p{#1}}
\begin{document}

\title{Simulation-Guided Boolean Resubstitution}

\author{Siang-Yun Lee}
\affiliation{
  \institution{EPF Lausanne}
  \country{Switzerland}
}

\author{Heinz Riener}
\affiliation{
  \institution{EPF Lausanne}
  \country{Switzerland}
}

\author{Alan Mishchenko}
\affiliation{
  \institution{University of California, Berkeley}
  \country{United States}
}

\author{Robert K. Brayton}
\affiliation{
  \institution{University of California, Berkeley}
  \country{United States}
}

\author{Giovanni De Micheli}
\affiliation{
  \institution{EPF Lausanne}
  \country{Switzerland}
}

\begin{abstract}
%Pattern simulation is one of the fastest operations in multi-level logic networks. Conjecturing that a set of expressive simulation patterns can always be found for any circuit, we proposed a framework to leverage on partial simulation signatures in guiding Boolean resubstitution. The contributions of this paper include: a) Generation methods of expressive simulation patterns based on stuck-at values, observability, and counter-example guided refinement. b) Simulation-guided resubstitution techniques in which divisors are not restricted in small windows, and global don't-cares can be utilized. Experimental results show an increase of about 50\% more size reduction comparing to the state-of-the-art windowing-based algorithm, with similar runtime. 

This paper proposes a new logic optimization paradigm based on circuit simulation, which reduces the need for Boolean computations such as SAT-solving or constructing BDDs. The paper develops a Boolean resubstitution framework to demonstrate the effectiveness of the proposed approach. Methods to generate highly expressive simulation patterns are developed, and the generated patterns are used in resubstitution for efficient filtering of potential resubstitution candidates to reduce the need for SAT validation.  Experimental results show that improvements in circuit size reduction were achieved by up to $74\%$, compared to a state-of-the-art resubstitution algorithm.
\end{abstract}
\maketitle

\section{Introduction}\label{sec:intro}
Logic optimization and circuit minimization~\cite{Brayton90,DeMicheli94} play an important role in \textit{electronic design automation} (EDA).  As the size and complexity of digital circuits grow, the demand for scalable optimizations of multi-level logic networks grows. Boolean methods, such as Boolean decomposition and resubstitution~\cite{IWLS2006,Mishchenko2011}, often improve the quality of logic synthesis but at the cost of more runtime taken by having to solve Boolean problems using a \textit{satisfiability problem} (SAT) solver or a \textit{binary decision diagram} (BDD) package. Algebraic and other local-search methods, on the other hand, are much faster because they are based on structural analysis, circuit simulation, \textit{sum-of-product} (SOP) factorization, or other time-efficient computations. However, the reductions in area and delay cannot compete with those achieved by Boolean methods.
 
In this paper, we introduce a promising new paradigm, \emph{simulation-guided logic synthesis}, that leverages simulation to minimize the number of expensive NP-oracle calls to equivalence checkers or SAT solvers during synthesis.  The runtime gained can be used to further improve the synthesis quality, leading to faster place-and-route, as happened in one study~\cite{Mishchenko10}.  The underlying hypothesis about using simulation is that \textit{expressive} simulation patterns can be amassed for a logic network and used later as an efficient filter to avoid unnecessary formal checks.  The proposed paradigm is useful in  algorithms dominated by expensive Boolean computation and is particularly suitable for techniques such as (1)~computation of structural choices~\cite{Chatterjee06} to improve the quality of mapping, (2)~scalable combinational equivalence checking for large designs~\cite{Mishchenko06b}, and (3)~gate matching between several versions of the same network. The resulting patterns can also be used in \textit{automatic test pattern generation} (ATPG)~\cite{Roth1966} and in circuit reliability analysis~\cite{DAC2010}.
 
% Several ways to compute highly expressive simulation patterns are proposed.

We demonstrate the advantages of simulation guidance by presenting a fast and efficient Boolean resubstitution framework, which iterates over the nodes and attempts to re-express their functions using other nodes in the network.  If updating a node's function makes other nodes in its fan-in cone dangling (i.e., having no fan-out), they can be deleted, resulting in the reduction of the network's size. For the special case of replacing a node directly with an existing node, it is equivalent to \textit{functional reduction} (FRAIG)~\cite{fraig}. 

The resubstitution engine can quickly identify and rule out most illegal resubstitution candidates by simply comparing simulation signatures.  SAT is used in two tasks: generating simulation patterns before synthesis, and validating resubstitutions found using simulation during synthesis. The runtime of the former task is not critical because the resulting patterns are reusable by different engines working on the same design, or by the same tool working on similar versions of the design. On the other hand, the runtime of SAT validation is reduced substantially, provided that the simulation patterns are expressive enough. 
To this end, we study what makes simulation patterns \emph{expressive} and profile different pattern generation strategies, including random simulation, stuck-at-value testing, observability checking, and combinations of these.  In the process of resubstitution, pre-computed simulation patterns can be refined further with the counter-examples generated by SAT.

The contributions of the paper are: (1) a simulation-based logic optimization framework, which separates the computation of expressive simulation patterns and their use to validate Boolean optimization choices; (2) methods to increase expressive power of simulation patterns, resulting in reduced runtime due to fewer SAT calls; (3) improvements to the computation of resubstitution functions, resulting in better quality of logic optimization. 
%We use the quality of logic optimization expressed in terms of the reduction in circuit size as the metric in our experiments. The experimental results show that we can achieve $57\%$ circuit size reductions in roughly half the runtime, compared to previous work.
The experimental results show that our new engine allows implicit consideration of global satisfiability don't-cares, which achieves $41\%$ improvement in circuit size reduction compared to a state-of-the-art windowing-based resubstitution algorithm~\cite{IWLS2006}. The proposed framework also allows low-cost extension of the window sizes used, resulting in a total of $74\%$ improvement.

%Finally, when we use highly expressive simulation patterns, the majority of resubstitutions found by simulation alone are correct, which helps reduce the runtime of SAT.
 
%This work has been partially motivated by the success of approximate logic synthesis, when substantial reductions in the circuit size are achieved without expensive Boolean computations, at the cost of introducing some errors into logic functions. We hope that future work in the area of simulation-based methods will help improve speed and quality of both accurate and approximate logic synthesis.

%Pattern simulation through logic networks can usually be done very fast with bit-wise operations. It can also be easily parallelized when compacted into 32- or 64-bit words. Typically, circuits with many primary inputs will have too many ($2^{\#PIs}$) possible input assignments to be fully simulated. However, we claim that partial simulation patterns are still useful in filtering possible resubstitution candidates. 

% can be removed if we do not have enough space:
The rest of this paper is organized as follows: After preliminaries are given in Section~\ref{sec:preli} and related works introduced in Section~\ref{sec:relate}, we first describe the general algorithmic framework in Section~\ref{sec:framework}. Then, pattern generation methods are explained in Section~\ref{sec:patgen}, and simulation-guided resubstitution techniques are proposed in Section~\ref{sec:resub}. Finally, experimental results are given in Section~\ref{sec:exp}, and conclusions in Section~\ref{sec:concl}.

\section{Preliminaries}\label{sec:preli}
\subsection{Logic Network}
Logic networks are \textit{directed acyclic graphs} (DAGs) composed of \textit{logic gates} and realizing \textit{Boolean functions}, which are functions defined over the Boolean space $\mathbb{B} = \{0,1\}$, taking \textit{primary inputs} as inputs and presenting the function outputs at the \textit{primary outputs}. In this paper, we work with \textit{and-inverter graphs} (AIGs), but this framework can also be applied to other types of logic networks.

A \textit{gate} in a logic network usually computes a simple function of its \textit{fan-ins}, and passes the resulting value to all of its \textit{fan-outs}. In the case of AIGs, a gate is always an AND gate, and the inverters are represented by complemented wires with no cost (that is, they do not add to the circuit size). We also refer to a gate as a \textit{node}. The \textit{transitive fan-in} (TFI) or the \textit{transitive fan-out} (TFO) of a node $n$ is a set of nodes such that there is a path between $n$ and these nodes in the direction of fan-in or fan-out, respectively.

\subsection{Don't-Cares}\label{subsec:dc}
The functionality of a logic circuit can be specified using an input-output relation. Input assignments when the output value does not matter, are called \textit{external don't-cares} (EXDCs). These assignments can be utilized to optimize the circuit.

For a set of internal nodes of a circuit, there might be some value combinations that never appear at these nodes. For example, an AND gate $g_1$ and an OR gate $g_2$ sharing the same fan-ins can never have $g_1 = 1$ and $g_2 = 0$ at the same time. This combination is a \textit{satisfiability don't-care} (SDC) of a common TFO node of $g_1$ and $g_2$.

An input assignment $\Vec{x}$ is said to be \textit{un-observable} for node $n$ if the circuit outputs do not change when $n$ is replaced by its negation $\overline{n}$. Or conversely, $n$ is \textit{un-testable} under $\Vec{x}$. Because the function of $n$ under $\Vec{x}$ does not matter, the un-observable patterns, called \textit{observability don't-cares} (ODCs), can be used to optimize node $n$.

\subsection{Simulation}
A \textit{simulation pattern} is a set of values assigned to the primary inputs. Circuit simulation is done by visiting nodes in a topological order. The \textit{simulation signature} of a node is an ordered set of values produced at the node under each simulation pattern. When the number of patterns is more than one, bit-parallel operations can be used to speed up simulation substantially.

\subsection{Resubstitution}
For each \textit{root} node in the circuit that we want to replace, we first estimate its \textit{gain}, or the number of nodes that can be deleted after a successful resubstitution, using the size of the node's \textit{maximum fan-out free cone} (MFFC). A node $n$ is said to be in the MFFC of the root node $r$ if $n$ is in the TFI of $r$ and all path from $n$ to the primary outputs pass through $r$. Then, a set of \textit{divisors} is collected. A divisor is a node that can be used to express the function of the root node. It should not be in the TFO cone of the root, otherwise the resulting circuit would be cyclic. It should also not be in the MFFC because nodes in the MFFC may be removed after resubstitution. Nodes depending on primary inputs that are not in the TFI of the root node can also be filtered out from the set of potential divisors.

A \textit{resubstitution candidate} (also abbreviated as a \textit{candidate}) is either a divisor itself or a simple function, named the \textit{dependency function}, built with several divisors. In the latter case, the candidate is represented by the top-most node of the implementation, named the \textit{dependency circuit}, of the function. A \textit{resubstitution}, or simply \textit{substitution}, is a pair composed of a root node and a resubstitution candidate, and it is said to be \textit{legal} if replacing the root node with the candidate does not change the global input-output relation of the logic network. Otherwise, the resubstitution is said to be \textit{illegal}.
% the functionality of the network under the care set. %if the candidate is functionally equivalent to the root node. 

%\subsection{SAT}

\section{Related Work}\label{sec:relate}
Research in Boolean resubstitution techniques dates back to the 1990s~\cite{Sato1991, Kravets1999}. In the 2000s, efforts were made to improve the scalability of BDD-based computations~\cite{Kravets2004} and to move away from BDDs to simulation and Boolean satisfiability (SAT)~\cite{Mishchenko2005}. The structural analysis (windowing) was introduced to  speed up the algorithm further~\cite{Mishchenko2011}. In \cite{Mishchenko2005}, the dependency function is computed by enumerating its onset and offset cubes using SAT. Random simulation is used for initial filtering of resubstitution candidates. In \cite{Mishchenko2011}, simulation is also used to find potential candidates, which are then checked by SAT solving. The dependency function is computed using interpolation~\cite{Craig1957}. Windowing is used to limit the search space and the SAT instance size, with the inner window as a working space, and the outer window as the scope for computing don't-cares.

A state-of-the-art Boolean resubstitution algorithm for AIGs using windowing was presented in \cite{IWLS2006}. It relies entirely on truth table computation, without any use of BDDs or SAT. The search for divisors is limited to a window near the root node.  The window inputs are computed as a size-limited reconvergence-driven cut. The node functions in the window are expressed in terms of the cut variables. The dependency function is not computed as a separate step after minimizing its support, as in \cite{Mishchenko2011}. Instead, simple functions up to three AND gates are tried for resubstitution using several heuristic filters. The windowing-based resubstitution framework has been generalized to many different gates types including majority gates~\cite{Riener18} and complex gates~\cite{Amaru18}.

Random simulation is a core tool in logic synthesis and has been used successfully to reduce the runtime of various computations.  Functional reduction~\cite{fraig}, for example, uses random and guided simulation to identify equivalent nodes and merge them.  Effective combinational equivalence checking~\cite{Mishchenko06b} also is used to find cut-points between two networks that serve as stepping stones for the final proof of equivalence at the primary outputs. Motivated by the efficacy of these techniques, the simulation-guided paradigm in this paper focuses on identifying a set of expressive simulation patterns. Once identified, these can be reused multiple times to speed up logic synthesis for the same network in various applications.

% In \cite{Chang2007}, the concept of utilizing counter-examples as effective simulation vectors is proposed, which is applied on design error correction.

\section{Framework}\label{sec:framework}
We introduce a promising new paradigm for logic synthesis that exploits fast bit-parallel simulation to reduce the number of expensive NP-hard checks, such as those based on SAT.  The rationale behind the idea is to pre-compute a set of ``expressive'' simulation patterns for a given logic network, which can rule out illegal transformations by comparing simulation signatures.

\begin{definition}\label{def:expressive}
We call a non-exhaustive set of simulation patterns \emph{expressive} for a logic network if the set can be used to pair-wise distinguish functionally non-equivalent nodes that either already exist in the logic network or can be derived from the existing nodes.
\end{definition}

Obviously, the  set of \textit{all} simulation patterns of  primary inputs satisfies this definition, but this  is typically too large for logic networks with $16$ or more primary inputs.  In practice, only  expressive simulation patterns that can be efficiently stored and simulated using less than, say, a few hundred or thousand bits are of interest.

\begin{assumption}\label{assumption}
We assume that, for a logic network with $N$ nodes, a set $S$ of expressive simulation patterns with size $|S| \leq C \cdot N$ exists, where $C$ is some constant parameter determined by the structure of the logic network.
\end{assumption}

This  means that  a set of expressive simulation patterns can be pre-computed, stored, and re-used by different logic synthesis engines when applied to the same design, or by the same engine when invoked multiple times. In the rest of this paper, this paradigm is demonstrated using Boolean resubstitution.

The resubstitution framework performs these steps:
\begin{enumerate}
    \item Generation of a set of expressive simulation patterns. In general, we can start with a set of random patterns, and refine or expand it with the techniques proposed in Section~\ref{sec:patgen}.
    \item Simulation of the network with these patterns to obtain simulation signatures for each node.
    \item Choosing a root node to be substituted. Estimating the gain by computing its MFFC and collecting the divisors. Skipping this node if the gain is too small or if there are no divisors.
    \item Searching for resubstitution candidates and the dependency function using simulation signatures. Details of this step are described in Section~\ref{sec:resub}.% and Section~\ref{subsec:divk}.
    \item Validating the resubstitution with SAT solving by assuming non-equivalence. An UNSAT result validates the resubstitution, while a SAT result provides an input assignment under which the substituted network is non-equivalent to the original network. In the latter case, the counter-example is added to the set of simulation patterns. %This step can be performed either before each substitution or after finding several substitutions, as will be described in Section~\ref{subsec:oneshot}.
    \item Iterating starting from Step~$3$, until all nodes in the circuit have been processed.
\end{enumerate}

%\textcolor{red}{Theorem proving correctness?}

\section{Simulation Pattern Generation}\label{sec:patgen}
It was observed that expressive simulation patterns cannot be  derived directly from the input-output function of the logic network, but must account for some structural information.  An intuitive explanation of this may be that an input-output function can be implemented by a large number of structurally different logic networks.  This agrees with the idea of re-using simulation patterns in multiple optimization passes because the initial structure of the network often is determined by high-level synthesis and later carefully fine-tuned by logic optimization.  Consequently, only a small fraction of closely-related structures is encountered during logic optimization of the network.

%Note that if this pattern set can be pre-computed and later be used when applying various optimization algorithms, the time needed to compute it will not be critical.

Motivated by Assumption~\ref{assumption}, we suggest two simple orthogonal strategies for pre-computing simulation patterns:
\begin{enumerate}
\item Random patterns: this  generates random values for the primary inputs with equal probability of  $0$ or $1$ for each bit.
\item Stuck-at patterns: this  iteratively selects nodes and computes patterns that distinguish each from constant functions $0$ and $1$. 
\end{enumerate}
The implementation of the first strategy is straightforward. We describe the second one in the following section. Then, in Section~\ref{subsec:obs}, we propose an observability-based method to strengthen the computed stuck-at patterns.

\subsection{Stuck-at Values}\label{subsec:stuck-at}
Some nodes in the circuit may rarely produce a value ($0$ or $1$) during random simulation. For example, the output of an AND gate with many fan-ins may be $0$ most of the time, hence $1$ is rather rare and may be critical. Thus we refine the set of simulation patterns by checking that every node has both values appearing in its simulation signature. If only one value occurs, a new simulation pattern is created by solving a SAT problem, which forces the node to have the other value.

The algorithm is illustrated in Figure~\ref{fig:patgen-stuckat}. In line \texttt{01}, we can either start with an empty set or a random set of simulation patterns. Then, in line \texttt{04}, for each node in the circuit, if  $0$ or $1$ does not appear, we try to generate a pattern to express the missing value (lines \texttt{05-08}). In an un-optimized circuit, there may be nodes which never take one of the values, so these are replaced by a constant node in line \texttt{10}.

We can  strengthen the pattern set further by assuring both values appear multiple times (for example, at least $10$ times) in the signature of every node. This can be done by running the SAT solver multiple times while making sure it takes different computation paths.

\begin{figure}[h]
\mbox{}\hrulefill
\small
\begin{program}
 {\bf \textit{StuckAtCheck}}\\
 \INPUT: a circuit $C$ \\
 \OUTPUT: a set of expressive simulation patterns $S$ \\
 01 \> \> $S$ := a small set of random patterns; $C$.\textit{simulate}($S$)\\
 02 \> \> \INITIALIZE \textit{Solver}; \textit{Solver}.\textit{generate\_CNF}($C$)\\
 03 \> \> \FOREACH node $n$ in $C$ \DO\\
 04 \> \> \> \IF $n$.signature $= \Vec{0}$ \OR $n$.signature $= \Vec{1}$ \DO\\
 05 \> \> \> \> \IF $n$.signature $= \Vec{0}$ \DO \textit{Solver.add\_assumption}($n$)\\
 06 \> \> \> \> \ELSE \DO \textit{Solver.add\_assumption}($\neg n$)\\
 07 \> \> \> \> \IF \textit{Solver.solve}() $=$ SAT \DO\\
 08 \> \> \> \> \> $S$ := $S~\cup$ \{\textit{Solver.pi\_values}\}\\
 09 \> \> \> \> \ELSE \DO \\
 10 \> \> \> \> \> Replace $n$ with constant node.\\
 11 \> \> \RETURN $S$ 
\end{program}
\hrulefill
\vspace{-1em}
\caption{Generation of expressive simulation pattern by asserting stuck-at values.}
\label{fig:patgen-stuckat}
\vspace{-0.3cm}
\end{figure}

\subsection{Observability}\label{subsec:obs}
As described in Section~\ref{subsec:dc}, there may be some simulation patterns that are not observable with respect to an internal node; these patterns are deemed less expressive. Here, two cases are identified where a (re-)generation of an observable pattern may be done:
\begin{itemize}
    \item Case $1$: In $StuckAtCheck$  when a node is stuck at a value, and a new pattern is generated to express the other value, this pattern is not observable.
    \item Case $2$: A node assumes both values, but for all the patterns under which the node assumes one of the values, it is not observable.
\end{itemize}

To \textit{resolve} un-observable patterns, a procedure \textit{ObservablePatternGeneration} is devised, which generates an observable simulation pattern $\Vec{x}$ with respect to a given node $n$ and makes sure that $n$ expresses a specified value $v$ under $\Vec{x}$. This procedure builds a CNF instance, as shown in Figure~\ref{fig:tfo-miter}, and solves it using the SAT solver. If the instance is satisfiable, an observable pattern is generated (Claim~\ref{thm:observable}), and we say that the originally un-observable pattern is \textit{resolved}. Else if the solver returns UNSAT, we conclude that value $v$ at node $n$ is not observable. Hence, it can be replaced by the constant node in the respective polarity (Claim~\ref{thm:untestable}).

\begin{figure}[h]
\centering
\includegraphics[width=0.2\textwidth]{./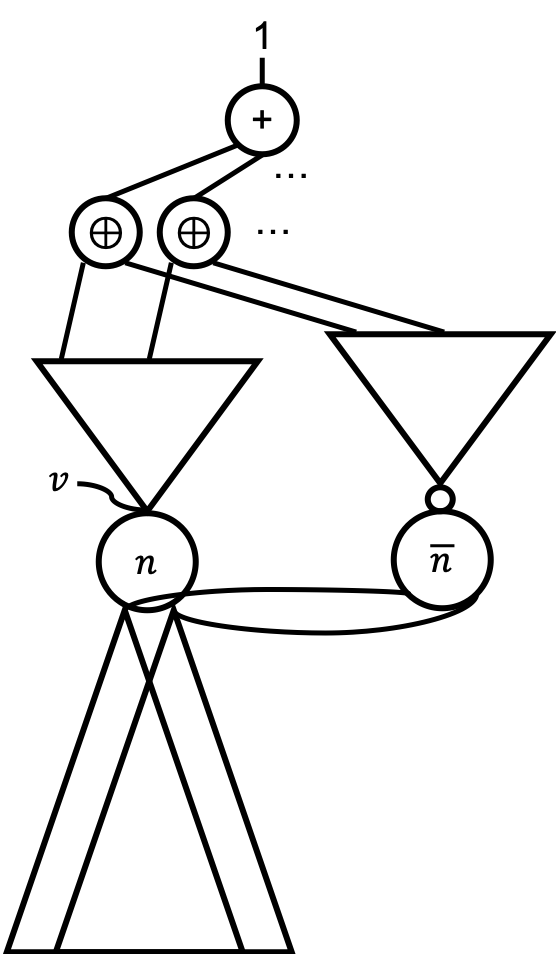}
\caption{Circuit of the CNF instance built in procedure \textit{ObservablePatternGeneration}. It is constructed by duplicating the TFO cone of $n$ and connecting it to $\overline{n}$. Primary outputs of the two TFO cones are matched and connected to XOR gates ($\oplus$), and the XOR gates are fed to an OR gate ($+$), whose output is asserted to be 1, forming a miter sub-circuit.}\label{fig:tfo-miter}
\end{figure}

\begin{claim}\label{thm:observable}
A satisfying input assignment $\Vec{x}$ in the circuit of Figure~\ref{fig:tfo-miter} is an observable pattern with respect to node $n$.
\end{claim}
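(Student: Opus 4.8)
The plan is to unwind the definitions of observability and the miter construction in Figure~\ref{fig:tfo-miter}, and show that satisfiability of the miter CNF is exactly the condition that the chosen input pattern is observable at $n$. First I would fix notation: let $\vec{x}$ be a satisfying assignment of the CNF instance. The instance contains two copies of the TFO cone of $n$; in the first copy node $n$ receives its normal driven value, while in the second copy it is forced to $\overline{n}$, i.e. the negated value, with both copies sharing the same primary inputs. Simulating the original network under $\vec{x}$ produces the primary-output vector of the first copy, and simulating the network with $n$ replaced by $\overline{n}$ under $\vec{x}$ produces the primary-output vector of the second copy, since everything outside the TFO cone of $n$ is untouched and the two copies agree on all shared inputs.

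Next I would use the structure of the miter: the matched primary outputs of the two cones are fed to XOR gates, the XOR outputs are OR-ed together, and the OR output is asserted to $1$. Hence in any satisfying assignment at least one XOR gate evaluates to $1$, which means at least one primary output differs between the two copies. Translating back through the previous paragraph, this says precisely that the circuit outputs change when $n$ is replaced by $\overline{n}$ under input assignment $\vec{x}$. By the definition of observability in Section~\ref{subsec:dc}, $\vec{x}$ is therefore \emph{not} un-observable for $n$ — that is, $\vec{x}$ is an observable pattern with respect to $n$, which is the claim. (If the construction additionally asserts $n = v$ in the first copy, the same satisfying assignment also witnesses that $n$ takes value $v$ under $\vec{x}$, giving the stronger statement used by \textit{ObservablePatternGeneration}.)

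I expect the main obstacle to be purely expository rather than mathematical: making the correspondence ``value on the wires of the duplicated cone under the SAT model'' $\leftrightarrow$ ``value obtained by bit-parallel simulation of the real (resp. perturbed) network'' fully rigorous. This requires noting that the CNF is the standard Tseitin encoding of the two cones so that every satisfying model assigns to each internal wire exactly the value that functional evaluation would, and that the only point where the two cones are decoupled from the shared sub-circuit is at $n$ versus $\overline{n}$; everything in the transitive fan-in of $n$, and everything not in the TFO cone of $n$, is represented once and shared. Once that bookkeeping is in place, the argument is a one-line consequence of the definition of ODCs. A secondary point worth a sentence is that ``primary outputs of the two TFO cones are matched'': outputs not in the TFO cone of $n$ cannot change at all, so it suffices to miter only the cone outputs, and the asserted OR being $1$ is then equivalent to a genuine change at some primary output of the whole network.
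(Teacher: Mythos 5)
Your proposal is correct and follows essentially the same argument as the paper: the paper's proof likewise observes that the miter over the matched primary outputs of the two TFO cones (original $n$ versus $\overline{n}$) asserted to $1$ forces at least one primary output to differ, which is exactly the definition of observability from Section~\ref{subsec:dc}. Your additional bookkeeping about the Tseitin encoding and the sharing of logic outside the TFO cone simply makes explicit what the paper leaves implicit; no new idea or different route is involved.
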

\begin{proof}
By the definition in Section~\ref{subsec:dc}, $\Vec{x}$ is observable with respect to $n$ if the value of at least one of the primary outputs of the circuit under $\Vec{x}$ is different when $n$ is replaced by $\overline{n}$. This condition is ensured by the miter of the TFO cones of $n$ and $\overline{n}$ in Figure~\ref{fig:tfo-miter}.
\end{proof}

\begin{claim}\label{thm:untestable}
If a node $n$ is never observable with value $v$ ($v \in \{0,1\}$), then it can be replaced by constant $\neg v$ ($\neg0=1, \neg1=0$) without changing the circuit function(s). That is, there does not exist a primary input assignment $\Vec{x}$, such that one of the primary outputs has different values in the original circuit and in the substituted circuit.
\end{claim}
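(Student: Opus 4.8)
The plan is to argue by a case split on the value carried at $n$ under each primary input assignment, reducing the statement to the definition of observability from Section~\ref{subsec:dc}. Let $C'$ denote the circuit obtained from $C$ by replacing node $n$ with the constant $\neg v$.

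First I would record the \emph{locality} of the substitution: $C$ and $C'$ differ only at $n$, so under any fixed assignment $\vec{x}$ every node outside the transitive fan-out of $n$ computes the same value in $C$ and in $C'$, and the values inside the TFO cone depend on $\vec{x}$ only through the single bit driven at $n$ together with those (unchanged) signals entering the cone from outside. Consequently, if the value driven at $n$ agrees between $C$ and $C'$ under $\vec{x}$, then all primary outputs agree under $\vec{x}$. It therefore suffices to show, for every $\vec{x}$, that either $n$ already carries $\neg v$ in $C$, or the discrepancy introduced at $n$ is invisible at the primary outputs.

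Next I would dispatch the two (exhaustive, since $v\in\{0,1\}$) cases. If $n(\vec{x}) = \neg v$, then $n$ carries $\neg v$ in $C$ as well as in $C'$, and the outputs coincide by the locality observation. If $n(\vec{x}) = v$, then in $C'$ the node $n$ carries $\neg v$, which is exactly the value it would carry if $n$ were replaced by $\overline{n}$; hence the primary outputs of $C'$ under $\vec{x}$ equal those of the ``$n \mapsto \overline{n}$'' circuit under $\vec{x}$. But $\vec{x}$ is an assignment with $n(\vec{x}) = v$, so by the hypothesis that $n$ is never observable with value $v$, the pattern $\vec{x}$ is un-observable with respect to $n$, i.e.\ replacing $n$ by $\overline{n}$ changes no primary output. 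Chaining these equalities shows the primary outputs of $C'$ under $\vec{x}$ equal those of $C$ under $\vec{x}$.

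Since $\vec{x}$ was arbitrary, $C$ and $C'$ realize the same primary-output functions, which is the claim. The step that warrants the most care is the first one — stating the locality of the substitution precisely enough that the two cases become immediate, and noting that replacing $n$ by a constant may leave part of its transitive fan-in dangling, which affects neither the output functions nor the argument. Beyond that, I do not anticipate a genuine obstacle.
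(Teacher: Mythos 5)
Your proposal is correct and follows essentially the same argument as the paper: a case split on the value of $n$ under an arbitrary assignment $\vec{x}$, using locality when $n(\vec{x})=\neg v$ and the un-observability hypothesis when $n(\vec{x})=v$. The paper merely phrases it as a proof by contradiction, which is a cosmetic difference; your direct version with the explicit locality remark is, if anything, slightly more careful.
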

\begin{proof}
Assume the opposite: there exists a primary input assignment $\Vec{x}$, such that at least one of the primary outputs has a different value after substituting $n$ with $\neg v$. If the value of $n$ is $\neg v$ under $\Vec{x}$, all node values in the circuit, including primary outputs, remain unchanged if $n$ is replaced by $\neg v$. If the value of $n$ is $v$ under $\Vec{x}$, because $n$ is not observable with $v$, all primary outputs remain at the same value when the node value of $n$ changes to $\overline{n} = \neg v$, which contradicts the assumption.
\end{proof}

In order to limit the computation in large circuits, the TFO in Figure~\ref{fig:tfo-miter} can be restricted to nodes within a certain distance from $n$, called the \textit{depth} of the TFO cone, instead of extending all the way to primary outputs. In this case, all the leaves of the cone should be XOR-ed with their counterparts to build the miter. Using depth of $5$ is empirically a good tradeoff between quality and runtime.

After an observable pattern $\Vec{x}$ is generated, in Case~$1$, we can replace the pattern generated by \textit{StuckAtCheck} with $\Vec{x}$. In Case~$2$, we simply add $\Vec{x}$ to the set of patterns.

%\textcolor{red}{\subsection{Circuit-Independent Pattern Generation}}

%\textcolor{red}{
%\subsection{Ranking And Filtering Patterns}
%May not have result to be included in this version.}

\section{Simulation-Guided Resubstitution}\label{sec:resub}
When the set of expressive simulation patterns is available, we exploit the simulation signatures of the nodes to implement Boolean resubstitution. The main difference of our algorithm, compared to a state-of-the-art resubstitution algorithm~\cite{Mishchenko2006}, is in the representation of divisors. Instead of using the complete truth table of the \textit{local} function of the node, we use the simulation signature approximating the \textit{global} function of the node.

The resubstitution algorithm is shown in Figure~\ref{fig:resub-basic}. Bit operators ($\BitNeg, \BitOr$ and $\BitAnd$) are implemented using bit-wise operations on simulation signatures.  Symbols $\neg, \wedge$ and $\vee$ indicate the creation of a complemented wire, an AND gate, and an OR gate (AND gate with the complemented inputs/output), respectively. 

In line \texttt{03}, procedure \textit{collect\_divisors} collects potential divisors in the TFI cone of $n$, excluding nodes in its MFFC or nodes depending entirely on other divisors. This computation can be extended to the whole circuit, excluding only  nodes in the TFO of $n$ or in its MFFC. In practice to keep the runtime reasonable, the number of collected divisors is limited.

First, resubstitution is tried without new nodes, as shown in lines \texttt{05-09}. If $n$ can be  expressed directly using a divisor or its negation, the network size is reduced by removing $n$ and its MFFC. Procedure \textit{verify} uses the SAT solver to try to find a pattern, under which nodes $n_1$ and $n_2$ have different values. The resubstitution is validated if the solver returns UNSAT (lines \texttt{25-26}); otherwise, a counter-example is added to the set of simulation patterns (lines \texttt{27-29}).

If the MFFC is not empty, substituting $n$ with simple functions of two divisors can be tried, which will lead to creating one new node. In the case of an AIG, the divisors  are partitioned into positive ($P$), negative ($N$) unate divisors and other, by checking the implication relation between their signatures (lines \texttt{11-16}). Then, in lines \texttt{17-22}, the signatures are compared to find potential resubstitutions using OR and AND functions. Resubstitution candidates of this type also need to be formally verified.

\begin{figure}[ht]
\mbox{}\hrulefill
\small
\begin{program}
 {\bf \textit{SimResub}}\\
 \INPUT: a root node $n$ in a simulated circuit $C$ \\
 \OUTPUT: a legal (verified) candidate to substitute $n$ \\
 01 \> \> \INITIALIZE \textit{Solver}; \textit{Solver}.\textit{generate\_CNF}($C$)\\
 02 \> \> $MFFC\_size$ := $|$\textit{compute\_MFFC}($n$)$|$\\
 03 \> \> $D$ := \textit{collect\_divisors}($n$)\\
 04 \> \> \IF $D = \emptyset$ \DO \RETURN \NULL\\
 05 \> \> \FOREACH divisor $d$ in $D$ \DO \texttt{/* resub-0 */}\\
 06 \> \> \> \IF $n$.signature $= d$.signature \DO\\
 07 \> \> \> \> \IF \textit{Solver.verify}($n$, $d$) \DO \RETURN $d$\\
 08 \> \> \> \IF $n$.signature $= \BitNeg d$.signature \DO\\
 09 \> \> \> \> \IF \textit{Solver.verify}($n$, $\neg d$) \DO \RETURN $\neg d$\\
 10 \> \> \IF $MFFC\_size = 0$ \DO \RETURN \NULL\\
 11 \> \> $P$ := $\emptyset$; $N$ := $\emptyset$\\
 12 \> \> \FOREACH divisor $d$ in $D$ \DO\\
 13 \> \> \> \IF $d$.signature $\rightarrow n$.signature \DO $P$ := $P \cup \{d\}$\\
 14 \> \> \> \ELSE \IF $\BitNeg d$.signature $\rightarrow n$.signature \DO $P$ := $P \cup \{\neg d\}$\\
 15 \> \> \> \ELSE \IF $n$.signature $\rightarrow d$.signature \DO $N$ := $N \cup \{d\}$\\
 16 \> \> \> \ELSE \IF $n$.signature $\rightarrow \BitNeg d$.signature \DO $N$ := $N \cup \{\neg d\}$\\
 17 \> \> \FOREACH pair of divisors $d_1,d_2$ in $P$ \DO \texttt{/* resub-1 */}\\
 18 \> \> \> \IF $n$.signature $= d_1$.signature $\BitOr~ d_2$.signature \DO\\
 19 \> \> \> \> \IF \textit{Solver.verify}($n$, $d_1 \vee d_2$) \DO \RETURN $d_1 \vee d_2$\\
 20 \> \> \FOREACH pair of divisors $d_1,d_2$ in $N$ \DO \texttt{/* resub-1 */}\\
 21 \> \> \> \IF $n$.signature $= d_1$.signature $\BitAnd~ d_2$.signature \DO\\
 22 \> \> \> \> \IF \textit{Solver.verify}($n$, $d_1 \wedge d_2$) \DO \RETURN $d_1 \wedge d_2$\\
 23 \> \> \RETURN \NULL \\ \\
 {\bf \textit{Solver.Verify}}\\
 \INPUT: two nodes, $n_1$ and $n_2$, in a simulated circuit $C$ \\
 \OUTPUT: whether it is legal to substitute $n_1$ with $n_2$ \\
 24 \> \> \textit{Solver.add\_assumption}(~\textit{literal}($n_1$) $\oplus$~ \textit{literal}($n_2$)~)\\
 25 \> \> \IF \textit{Solver.solve}() $=$ UNSAT \DO \\
 26 \> \> \> \RETURN \TRUE \\
 27 \> \> \ELSE \\
 28 \> \> \> $C$.\textit{add\_pattern}(\textit{Solver.pi\_values})\\
 29 \> \> \> \RETURN \FALSE
\end{program}
\hrulefill
\vspace{-1em}
\caption{Simulation-guided resubstitution.}
\label{fig:resub-basic}
\vspace{-0.3cm}
\end{figure}

\section{Experimental Results}\label{sec:exp}
The pattern generation and the simulation-guided resubstitution framework are implemented in C++-17 as part of the EPFL logic synthesis library \textit{mockturtle}\footnote{\hyperlink{github.com/lsils/mockturtle}{github.com/lsils/mockturtle}}. The experiments are performed on a Linux machine with Xeon 2.5 GHz CPU and 256 GB RAM. The OpenCore designs from IWLS'05 benchmark\footnote{\hyperlink{iwls.org/iwls2005/benchmarks.html}{iwls.org/iwls2005/benchmarks.html}} are used for testing.

In this section, we investigate the expressiveness of simulation patterns generated using different methods and compare their impact on resubstitution. The advantages of the framework are measured in terms of the circuit size reduction and compared against state-of-the-art. Also, quality/speed trade-offs are explored.

\subsection{Size of Simulation Pattern Set}
% #cex
% fair in terms of subset

Intuitively, the more simulation patterns used, the higher is the chance that the framework saves time by not attempting to validate illegal resubstitutions, i.e. a larger set of simulation patterns is expected to be more expressive. Following Definition~\ref{def:expressive} in Section~\ref{sec:framework}, we measure the \textit{expressive power} of a pattern set using the percentage decrease in the number of counter-examples encountered in the resubstitution flow, compared to the baseline set calculated separately for each benchmark. Different from a typical resubstitution flow, the counter-examples are not added to the simulation set, to isolate the impact of the provided patterns.

We start by investigating the expressive power of random patterns based on their count. In Figure~\ref{fig:exp-randpat}, each bar represents how expressive is a pattern set of the respective size, compared to the baseline of using only four simulation patterns. The smaller sets are subsets of the larger sets to avoid the biasing effect of randomness. As the size grows by the factor of four (leading to 4, 16, 64, etc patterns), the expressive power increases very fast at first, as expected, but saturates at a few hundreds to a few thousands of patterns. Fortunately, a thousand patterns is still a practical size, for which bit-parallel simulation runs fast.

\begin{figure}[ht]
\centering
\includegraphics[width=0.48\textwidth]{./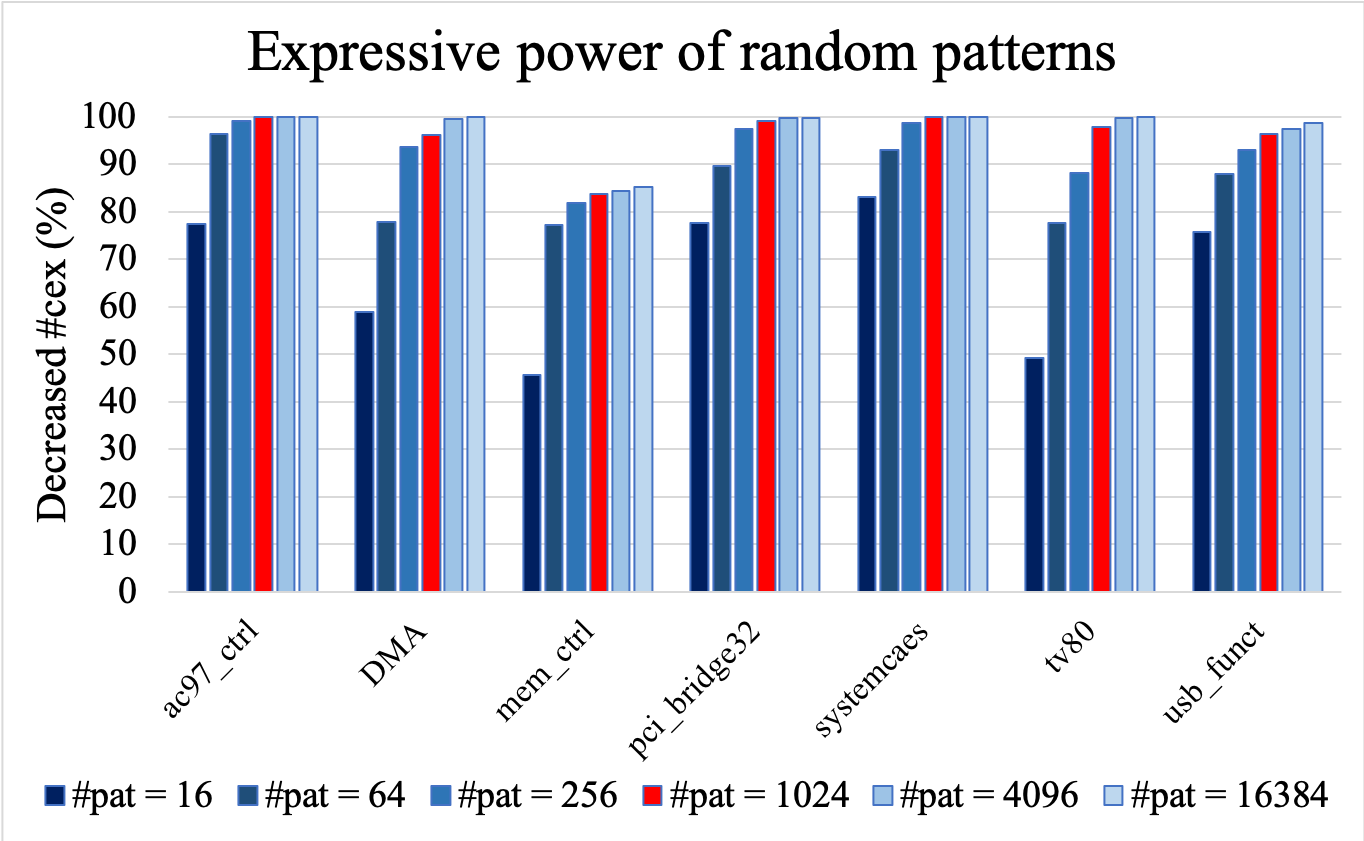}
\caption{Decreased percentages of counter-examples when provided with different number ($\#pat$) of random simulation patterns comparing to $\#pat = 4$.}\label{fig:exp-randpat}
\end{figure}

A similar phenomenon is observed when patterns are generated by $StuckAtCheck$. As discussed in Section~\ref{subsec:stuck-at}, additional patterns can be used to ensure that every node has at least $b$ bits of $0$ and $b$ bits of $1$ in its signature. In the following experiments, stuck-at patterns are abbreviated as ``\texttt{s-a}'', with a prefix ``\texttt{$b$x}'' listing parameter $b$. Since the stuck-at pattern counts are different for each testcase, the pattern set size is normalized to the circuit size and plotted in the logarithmic scale in Figure~\ref{fig:exp-sa} and the following figures.

In Figure~\ref{fig:exp-sa}, it is observed that larger sets of patterns are usually more expressive. Note that randomness plays a role in this case, since the default variable polarities, which determine initial variable values in the SAT solver, are randomly reset before each run.

\begin{figure}[ht]
\centering
\includegraphics[width=0.48\textwidth]{./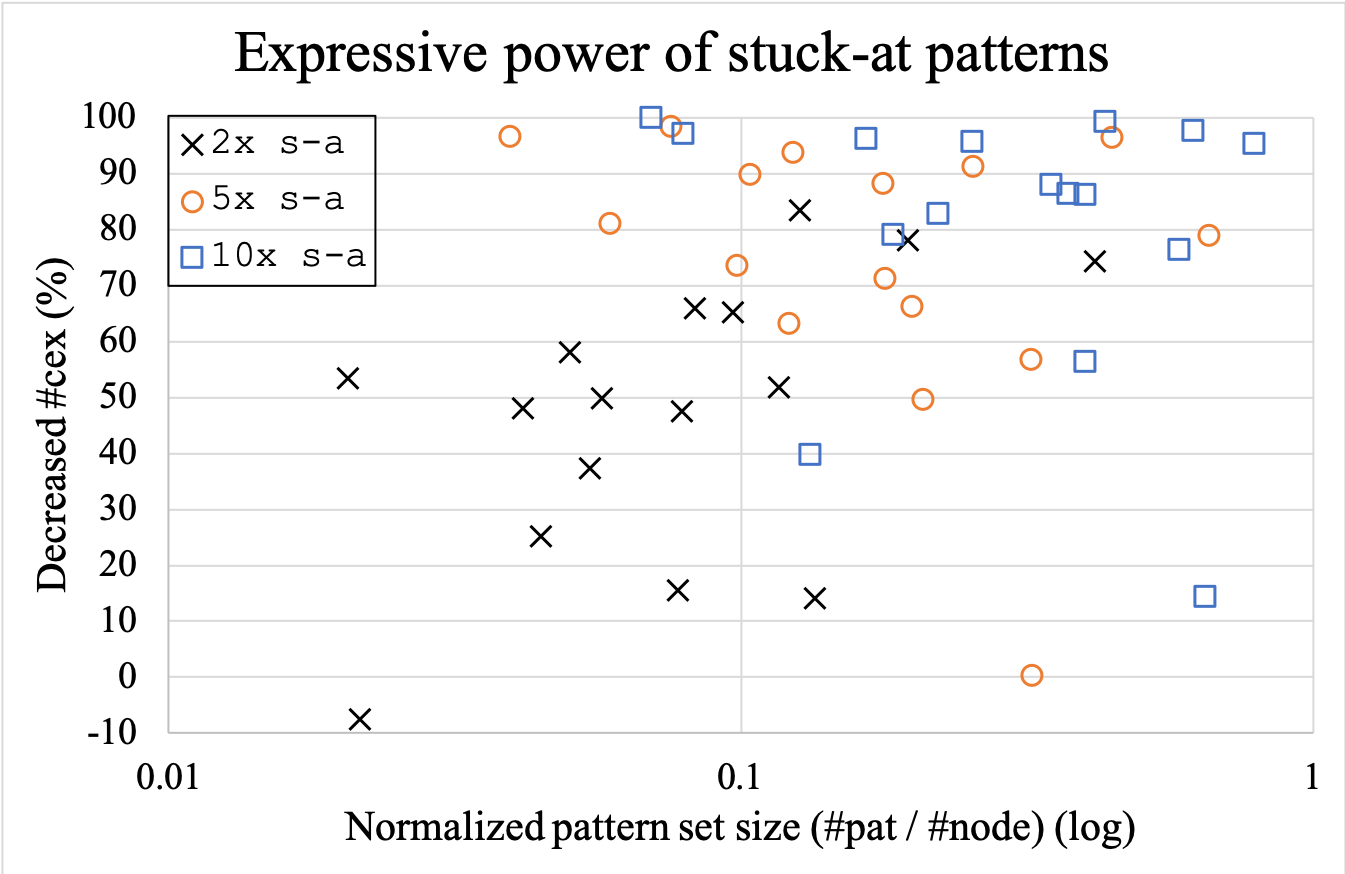}
\caption{Decreased percentages of counter-examples when using different sets of stuck-at simulation patterns, compared to using the ``\texttt{1x s-a}'' set.}\label{fig:exp-sa}
\end{figure}

\subsection{Pattern Generation Strategies}
In this section,  the expressive power of simulation patterns generated by $StuckAtCheck$ is compared with the case when observability is used (suffix ``\texttt{-obs}'') and/or when an initial random pattern set of size $256$ is used (prefix ``\texttt{rand 256}''). 

The observability check and observable pattern generation are done with a fan-out depth of $5$ levels. A set of $256$ random patterns is used as the baseline in Figure~\ref{fig:exp-strategies}. Note that there are a few benchmarks, for which the random pattern sets are more expressive than ``\texttt{1x s-a}'' and/or ``\texttt{1x s-a-obs}''. These are not shown in the figure. The geometric means of the sizes of the pattern sets are $143$ for ``\texttt{1x s-a}'', $244$ for ``\texttt{1x s-a-obs}'', $354$ for ``\texttt{rand 256 + 1x s-a}'' and $462$ for ``\texttt{rand 256 + 1x s-a-obs}''. On the other hand, the geometric means of the decreased percentages of the counter-examples are $91.3\%$, $96.5\%$, $97.1\%$ and $99.5\%$, respectively.

It is observed that patterns generated by $StuckAtCheck$ are usually more expressive than random patterns, except for a few, typically small, benchmarks. Also, using observability increases the expressive power of the generated patterns. Finally, seeding the pattern generation engine with an initial set of random patterns not only speeds up the generation process, but also makes the resulting patterns more expressive.

\begin{figure}[ht]
\centering
\includegraphics[width=0.48\textwidth]{./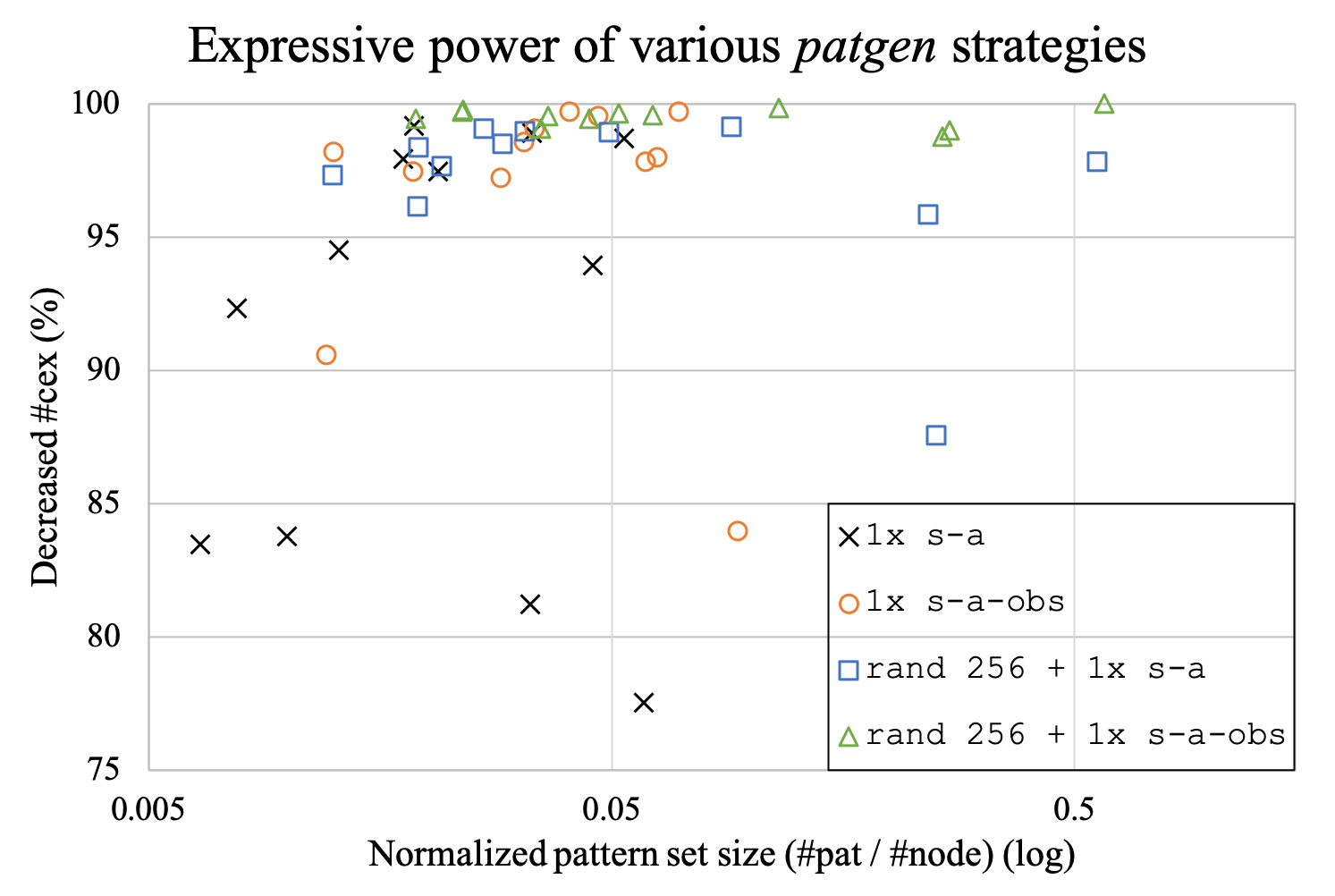}
\caption{The decrease of percentage of counter-examples when different patterns are used, relative to ``\texttt{rand 256}''. Four benchmarks are excluded because their baseline is more expressive than ``\texttt{1x s-a}'' and/or ``\texttt{1x s-a-obs}''.}\label{fig:exp-strategies}
\end{figure}

\subsection{Effect of Expressive Patterns in Resubstitution}\label{subsec:cex}
As stated in the introduction, an expressive set of simulation patterns is used to shift the computation effort from the optimization algorithms to  pattern pre-computation. Table~\ref{tbl:runtime} shows how the quality of the patterns affects the runtime of pattern generation (\textit{patgen}) and resubstitution (\textit{resub}). A better set of patterns (Table~\ref{tbl:runtime}, ``\texttt{5x s-a}'') efficiently filters out many illegal resubstitutions without calling the SAT solver, resulting in the reduced counter-example counts (\#cex) and faster runtimes.

Furthermore, in practice, when the same design is repeatedly synthesized during development or when simulation patterns are reused by different optimization engines, counter-examples from the previous runs can be saved for later use. In this case, additional counter-example count, generated during later runs, can go down to nearly zero, and the runtime is only spent on logic synthesis or verification tasks, such as proving equivalences among the nodes or computing dependency functions and validating them. The latter scheme will be used in the next section.

\begin{table}[htb]
  \caption{Resubstitution runtime as a function of the number of counter-examples produced.}\label{tbl:runtime}
  \scriptsize
  \centering
  \vspace{-1em}
  \begin{tabular}{|l||r|r|r||r|r|r|}
    \hline
    \hline           & \multicolumn{3}{c||}{\texttt{rand 256}} & \multicolumn{3}{c|}{\texttt{5x s-a}} \\
    \hline benchmark & \#cex &\multicolumn{2}{c||}{runtime (s)} & \#cex & \multicolumn{2}{c|}{runtime (s)} \\
    \hline  &  & \textit{patgen} & \textit{resub} &  & \textit{patgen} & \textit{resub} \\
    \hline
    ac97\_ctrl    &  72   & 0.01 & 0.53  & 47  & 3.16  & 0.43 \\
    aes\_core     &  50   & 0.02 & 2.22  & 9   & 3.91  & 1.95 \\
    DMA           &  1366 & 0.02 & 15.30 & 69  & 19.43 & 1.30   \\
    mem\_ctrl     &  2737 & 0.01 & 16.23 & 211 & 5.30  & 0.81   \\
    pci\_bridge32 &  1223 & 0.02 & 12.94 & 174 & 14.73 & 2.58  \\
    systemcaes    &  130  & 0.01 & 0.69  & 64  & 1.93  & 0.43   \\
    usb\_funct    &  1189 & 0.01 & 7.84  & 195 & 5.49  & 1.70   \\
    tv80          &  748  & 0.01 & 2.75  & 192 & 3.43  & 1.04   \\
    \hline
    \hline
  \end{tabular}
\end{table}

\subsection{Quality of Simulation-Guided Resubstitution}
% comparing to state-of-the-art
% demonstrating quality advantage of this framework

This section shows the improvements in terms of resubstitution quality. Table~\ref{tbl:quality} compare the proposed framework with command \texttt{resub}~\cite{IWLS2006} in ABC~\cite{ABC}, which performs truth-table-based resubstitution. Because computing simulation patterns in our framework results in detecting combinational equivalences~\cite{fraig}, for a fair comparison, the benchmarks are pre-processed by repeating the command \texttt{ifraig} in ABC until no more size reduction is observed. The quality of results is measured using the reduction in circuit size after optimization, presented in the \textit{gain} columns.  Simulation patterns used in our framework are initially generated with ``\texttt{rand 256 + 1x s-a-obs}'' and then supplemented with the counter-examples generated from the previous runs of the same experiment.

The maximum cut size $K$ used to collect divisors in the TFI of the root node can be set in both flows. Since \cite{IWLS2006} relies on computing truth tables in the window, $K\leq10$ is typically used as a reasonable trade-off between efficiency and quality. In contrast, windowing in our framework is applied only to avoid potential runtime blow-up for large benchmarks, and $K$ can be set to arbitrarily large values when longer runtime is acceptable. 

Table~\ref{tbl:quality} shows that our framework achieves a $40.96\%$ improvement on average using the same, small window size, and that we are able to extend the window size and achieve up to $73.82\%$ improvement without runtime overhead. 

\begin{table*}[htb]
  \caption{Quality comparison against \texttt{ABC} with different window sizes.}\label{tbl:quality}
  \scriptsize
  \centering
  \vspace{-1.5em}
  \begin{tabular}{|l|R{0.6cm}|R{0.6cm}||R{1cm}|R{1cm}||R{1.2cm}|R{1.2cm}|R{1.2cm}||R{1.2cm}|R{1.2cm}|R{1.2cm}|}
    \hline
    \hline \multicolumn{3}{|c||}{\texttt{abc> ifraig} until sat.} & \multicolumn{2}{C{2.3cm}||}{\texttt{abc> resub -K 10}} & \multicolumn{3}{C{3.9cm}||}{Ours, $K=10$} & \multicolumn{3}{C{3.9cm}|}{Ours, $K=100$} \\
    \hline benchmark     &   size & \#PIs & gain & time (s) & gain & time (s) & improv. (\%) & gain & time (s) & improv. (\%) \\
    \hline
    ac97\_ctrl    & 14199  & 4482 & 177 & 0.13 & 178 & 0.25 & 0.56  & 181  & 0.26 & 2.26 \\
    aes\_core     & 21441  & 1319 & 322 & 0.42 & 343 & 1.18 & 6.52  & 496  & 3.44 & 54.04 \\
    des\_area     & 4827   &  496 & 88  & 0.07 & 105 & 0.11 & 19.32 & 104  & 0.55 & 18.18  \\
    DMA           & 21992  & 5070 & 195 & 0.24 & 229 & 0.87 & 17.44 & 286  & 2.05 & 46.67 \\
    i2c           & 1120   &  275 & 48  & 0.01 & 57  & 0.02 & 18.75 & 86   & 0.03 & 79.17  \\
    mem\_ctrl     & 8822   & 2281 & 218 & 0.09 & 1219& 0.27 & 459.17& 1350 & 1.42 & 519.27  \\
    pci\_bridge32 & 22521  & 6880 & 176 & 0.43 & 194 & 0.80 & 10.23 & 267  & 1.42 & 51.70 \\
    sasc          & 770    &  250 & 5   & 0.01 & 5   & 0.01 & 0.00  & 5    & 0.01 & 0.00 \\
    simple\_spi   & 1034   &  280 & 18  & 0.01 & 17  & 0.01 & -5.56 & 23   & 0.01 & 27.78 \\
    spi           & 3762   &  505 & 81  & 0.06 & 84  & 0.08 & 3.70  & 89   & 0.44 & 9.88 \\
    ss\_pcm       & 405    &  193 & 1   & 0.00 & 1   & 0.00 & 0.00  & 1    & 0.00 & 0.00 \\
    systemcaes    & 12108  & 1600 & 36  & 0.11 & 48  & 0.19 & 33.33 & 55   & 0.66 & 52.78  \\
    systemcdes    & 2857   &  512 & 138 & 0.04 & 155 & 0.10 & 12.32 & 161  & 0.28 & 16.67 \\
    tv80          & 9093   &  732 & 221 & 0.14 & 260 & 0.34 & 17.65 & 451  & 3.05 & 104.07 \\
    usb\_funct    & 15278  & 3620 & 452 & 0.15 & 581 & 1.03 & 28.54 & 1199 & 1.49 & 165.27  \\
    usb\_phy      & 440    &  211 & 12  & 0.00 & 16  & 0.01 & 33.33 & 16   & 0.01 & 33.33 \\
    \hline \multicolumn{3}{|c||}{average} 
                               & & 0.12 & & 0.33 & 40.96 & & 0.95 & 73.82  \\
    \hline
    \hline
  \end{tabular}
\end{table*}

\section{Conclusions and Future Works}\label{sec:concl}
The paper presented (1) a simulation-based logic optimization framework, which separates the computation of expressive simulation patterns and their use to validate Boolean optimization choices; (2) methods to increase expressiveness of simulation patterns, resulting in reduced runtime due to fewer SAT calls; (3) improvements to the flexibility of resubstitution candidates, resulting in better optimization quality.
%\textbf{{rkb I did not see where this was done in the paper}}

This work has been partially motivated by the success of approximate logic synthesis, when substantial reductions in the circuit size are achieved without expensive Boolean computations, at the cost of introducing some errors into logic functions. We hope that future work in the area of simulation-based methods will help improve speed and quality of both exact and approximate logic synthesis.

In particular, future work may include developing strategies to refine and enhance the generated simulation patterns further, metrics to evaluate and sort the patterns, and methods to compress the pattern set while preserving  expressiveness. While resubstitution guided by simulation signatures automatically accounts for satisfiability don't-cares, observability don't-cares can also be considered in the validation of resubstitution, resulting in better quality. %\textbf{{rkb Wasn't this discussed already in the paper?}}

The search space of resubstitution candidates can be extended to include complex dependency functions requiring more than one gate. As shown in Section~\ref{subsec:cex}, using expressive patterns reduces the chance of encountering counter-examples, making it possible to further reduce the use of SAT solving by validating several resubstitutions at the same time, if almost all of them are legal. To deal with benchmarks containing millions of nodes, incremental CNF construction can be used to limit the size of the SAT instance because too many unrelated clauses slow down SAT solving. Finally, the framework can be extended to optimize mapped networks and to perform other types of Boolean optimization. %\textbf{(rkb this paragraph does not read right}

\begin{acks}
This work was supported in part by the EPFL Open Science Fund, and by SRC Contract 2867.001, "Deep integration of computation engines for scalability in synthesis and verification".
\end{acks}

\end{document}